
\documentclass[letterpaper, 10 pt, conference]{ieeeconf}  


\IEEEoverridecommandlockouts                              
\usepackage{cite}

\usepackage{amsmath,amssymb,amsfonts}
\usepackage{amsthm}
\usepackage{algorithm2e}
\usepackage{graphicx}
\usepackage{makecell}
\usepackage{caption}
\usepackage{textcomp}
\usepackage{comment}

\DeclareMathOperator*{\argmax}{arg\,max}

\usepackage{xcolor}
\def\BibTeX{{\rm B\kern-.05em{\sc i\kern-.025em b}\kern-.08em
    T\kern-.1667em\lower.7ex\hbox{E}\kern-.125emX}}                                       
\newcommand{\rom}[1]{\uppercase\expandafter{\romannumeral #1\relax}}
    
\overrideIEEEmargins

\theoremstyle{plain}
\newtheorem{theorem}{Theorem}

\newtheorem{lemma}{Lemma}

\newtheorem{definition}{Definition}
\newtheorem{cat}{Category}
\newtheorem{ass}{Assumption}

\theoremstyle{remark}
\newtheorem{remark}{Remark}
\newtheorem*{galg}{Greedy algorithm}

\usepackage{graphics} 

\title{\LARGE \bf
An Improved Greedy Curvature Bound in Finite-Horizon String Optimization with Application to a Sensor Coverage Problem 
}

\author{Brandon Van Over, Bowen Li, Edwin K.P. Chong, and Ali Pezeshki
\thanks{This work is supported in part by the AFOSR under award FA8750-20-
2-0504 and by the NSF under award CNS-2229469.}
\thanks{B. Van Over, B. Li, E. K. P. Chong and A. Pezeshki are with the Department of Electrical and Computer Engineering, Colorado State University, Fort Collins, CO 80523, USA. Email:
        {\tt\small \{b.van\_over, bowen.li, edwin.chong, ali.pezeshki\}@colostate.edu}}%
}

\begin{document}

\maketitle
\thispagestyle{empty}
\pagestyle{empty}

\begin{abstract}
We study the optimization problem of choosing strings of finite length to maximize string submodular functions on string matroids, which is a broader class of problems than maximizing set submodular functions on set matroids. We provide a lower bound for the performance of the greedy algorithm in our problem, and then prove that our bound is superior to the greedy curvature bound of Conforti and Cornuéjols. Our bound has lower computational complexity than most previously proposed curvature bounds. Finally, we demonstrate the strength of our result on a sensor coverage problem. 
\end{abstract}

\section{INTRODUCTION}
In sequential decision-making and optimal control, we
commonly face the problem of choosing a set of actions
over a finite horizon to maximize a given objective function. In numerous cases those objective functions display the diminishing return property, otherwise known as submodularity, in many real-life applications such as document summarization \cite{lin2011class}, feature selection\cite{krause2012near}, and optimizing viral marketing strategies on social media \cite{kempe2003maximizing}. However, computing the optimal solution of this class of problems becomes more intractable with the increasing size of the action space and growing horizon. The aforementioned unmanageable computations have motivated people to approximate optimal solutions. One widely used approximation method is the greedy algorithm, in which we select the action maximizing the incremental gain of the objective function at each step.

Previous work has mostly been devoted to providing performance guarantees for the greedy algorithm applied to submodular set functions on set matroids such as in \cite{calinescu2011maximizing}, as well as the seminal result from Nemhauser \textit{et al.}, which provides a guarantee of the greedy algorithm yielding an objective value that is at least $1/2$ of the optimal one over a finite general set matroid, and $1-e^{-1}$ over a finite uniform set matroid \cite{nemhauser1978analysis}. Later on, different types of computable \textit{curvatures} were introduced by Conforti and Cornuéjols in \cite{conforti1984submodular}. Of particular relevance was that of the greedy curvature, which in numerous cases allowed for improvement upon the $1-e^{-1}$  lower bound in \cite{nemhauser1978analysis}. More recently, other notions of curvature have been developed such as elemental curvature \cite{wang2016approximation}, partial curvature \cite{liu2019improved}, and extended greedy curvature \cite{welikala2022new}, which are computed and their subsequent values are used in a formula bounding the performance of the greedy algorithm. 

Others have considered more general problems of maximizing functions on strings such as \cite{streeter2008online} and \cite{zhang2015string}. The similarity between the present work and those of \cite{zhang2015string} and \cite{streeter2008online} is worth noting in that both works investigate the problem of maximizing functions on strings with some conditions similar to string submodularity. In fact, our notions of string matroid and string submodularity coincide exactly with \cite{zhang2015string}. The main distinction of our results  is that the bound we present is computable, while those of \cite{streeter2008online} and \cite{zhang2015string} rely on curvatures that are intractable to compute for large domains.

In our work, we improve upon the greedy curvature bound presented by Conforti and Cornuéjols in \cite{conforti1984submodular}, and show that the resulting bound can be reduced to a simple quotient. The strength of our work lies in the following three attributes:
\begin{itemize}
    \item Our bound has a wide scope of applicability to string submodular functions on two classes of finite rank string matroids.
    \item Our bound is always computable.  
    \item Our bound is provably superior to the greedy curvature bound in \cite{conforti1984submodular}.
\end{itemize}

The rest of the paper is organized as follows. Section \rom{2} introduces all the mathematical preliminaries regarding the string optimization problems and how to translate between set and string matroids. Section \rom{3} presents some previous results on the performance bound of greedy algorithm, including the classical bound and some curvature bounds. Our bound and the relevant theoretical contributions are shown in Section \rom{4}, followed by the numerical demonstration on sensor coverage problem in Section \rom{5}. 

\section{Mathematical Preliminaries}
\subsection{Problem Setup}

We define a \textbf{string} of length $n$ comprised of elements from a set $\mathcal{A}$ to be an ordered $n$-tuple $A = (a_1, ..., a_n)$ where $a_i \in \mathcal{A}$ for all $1 \leq i \leq n$, and the length is denoted by $|A| = n$. The empty string, denoted by $\emptyset$, will have length zero.  In our work the set $\mathcal{A}$ will be referred to as the \textbf{action set} and any element $a \in \mathcal{A}$ will be referred to as an \textbf{action}. We then let $\mathcal{A}^*$ be the \textbf{universal action set}, i.e. the set of all strings comprised of actions from $\mathcal{A}$ of arbitrary length. When we want to restrict to the subset of all strings whose length is less than or equal to some fixed finite number $K$, the set is denoted by $\mathcal{A}^*_K$. 

On strings we define the binary operation of \textit{concatenation}, denoted by $\cdot$, which takes strings $A= (a_1, ..., a_k)$ and $B = (b_1, ..., b_l)$ belonging to $\mathcal{A}^*$ and produces the new string \[
A \cdot B = (a_1, ..., a_k, b_1, ..., b_l) \in \mathcal{A}^*.
\]
We say that two strings $A, B \in \mathcal{A}^*$ are equal if $|A| = |B|$ and $a_i = b_i$
 for all $i=1, \ldots, |A|$. We define the partial ordering $\preccurlyeq$ on $\mathcal{A}^*$ to be $A \preccurlyeq B$ if and only if $B = A\cdot L$ for some $L \in \mathcal{A}^*$.
 
As in \cite{zhang2015string} we say that a function $f: \mathcal{A}^* \to \mathbb{R}$ is \textbf{string submodular} if 

\begin{enumerate}
    \item $f$ has the \textit{forward monotone} property, i.e., $\forall A \preccurlyeq B \in \mathcal{A}^*, f(A) \leq f(B)$.
    \item $f$ has the \textit{diminishing returns} property, i.e., $\forall A \preccurlyeq B \in \mathcal{A}^*, \forall a \in \mathcal{A} \text{ that are feasible at } A \text{ and } B, f(A \cdot a)-f(A) \geq f(B \cdot a) - f(B)$. 
\end{enumerate}

\begin{remark}
For brevity of notation, we denote $f(A \cdot a) - f(A)$ by $\rho_{a}(A)$ and call this quantity the \textbf{discrete derivative} on string $A$. We also say that when $A \cdot a $ belongs to the restricted domain of our function, then $a$ is \textbf{feasible} at $A$. 
Besides, we will assume that $f(\emptyset)=0$ for the rest of the paper, since if not, we may replace $f$ by $f-f(\emptyset)$.  
\end{remark}

In the present work we consider the restriction of string submodular functions $f: \mathcal{A}^* \to \mathbb{R}$ to subsets $\mathcal{A}^*_K \subset \mathcal{A}^*$ for a finite $K$ and seek to solve the optimization problem:

\[ \max_{A \in \mathcal{A}^*_K}f(A). 
\]
The solution to the above problem will be referred to as the \textit{optimal strategy} and is denoted $O_K = (o_1, ..., o_K)$.  Unfortunately finding the optimal strategy is intractable in general, and so we approximate $f(O_K)$ using the output of the greedy algorithm. The greedy algorithm is defined as follows:

\begin{galg}
A string $G_K = (g_1, ..., g_K)$ is called \textit{greedy} if $\forall i = 1, ..., K$,

\[
g_i \in \text{arg}\max_{g\in \mathcal{A}}f((g_1, ..., g_{i-1}, g)).
\]
\end{galg}

To get an idea of how close the greedy strategy is to optimal, we seek to place a lower bound on the quantity $f(G_K)/f(O_K)$, known as the \textbf{performance bound}. 

\begin{remark}Our result on the performance bound will hold when the following two properties are true for our string matroid:

\begin{enumerate}
    \item Each component $o_i$ of the optimal strategy $O_K = (o_1, ..., o_K)$ exists as a string of length one in the string matroid.
    \item Each individual component $o_i$ of the optimal string is feasible for the greedy strategy computed up to stage $i-1$ for $i =1, ..., K$, in other words, $o_i$ is feasible for the string $G_{i-1} = (g_1, ..., g_{i-1})$. 
\end{enumerate}

\end{remark}
We can guarantee that these two conditions are satisfied for two classes of string matroids. The first is finite rank uniform string matroid, and the second is string matroid arising from a finite rank set matroid. We elaborate further on this in Sec. \rom{4}.

In the following two subsections, we introduce the notion of a finite rank set matroid to motivate the definition of a finite rank string matroid, and then show that a subset of rank $K$ string matroid can be obtained from a rank $K$ set matroid in a structure preserving way. 


\subsection{Finite Rank String Matroid}
\begin{definition}
Let $N$ be any ground set, and $X$ a family of subsets of $N$. We say that
\textit{$(N, X)$ is a  finite rank $K$ \textbf{set matroid}} if 
\begin{enumerate}
    \item $|S| \leq K$, for all $S \in X$.
    \item $S \in X$ and $T \subset S$ implies $T \in X$ for all $S \in X$.
    \item For every $S, T \in X$ where $|T| +1 = |S|$, there exists $j \in S \setminus T$ such that $T \cup \{j\} \in X$.
\end{enumerate}
\end{definition}

\begin{remark}
The sets $S \in X$ are referred to as \textbf{independent sets} and an independent set of maximal length is known as a \textbf{basis}. The size of a basis is referred to as the \textbf{rank} of the matroid. In the present paper we will define set submodularity of functions on set matroids and their discrete derivatives in the same manner as we did for string matroids, with the exception that the partial ordering is changed from $\preccurlyeq$ to $\subset$, and the concatenation of strings is replaced with the union of sets. 
\end{remark}

Taking inspiration from the previous definition and from \cite{zhang2015string} we can define a \textit{finite rank string matroid}:

\begin{definition}
Let $\mathcal{A}$ be our action set, and $X \subset \mathcal{A}^*$. Then $X$ is a \textbf{finite rank string matroid} if
\end{definition}

\begin{enumerate}
    \item $|A| \leq K$ for all $A \in X$.
    \item If $B \in X$ and $A \preccurlyeq B$, then $A \in X$.
    \item For every $A, B \in X$ such that $|A|+1 = |B|$, there exists an $a \in \mathcal{A}^*$ which is a component of $B$ such that $A \cdot a \in X$.
\end{enumerate}

\begin{remark}
Similar to the above, we define the rank of $X$ to be the length of the largest strings which by the second axiom must all be the same size. When $X = \mathcal{A}^*_K$, then we say that $X$ is the \textbf{uniform string matroid of rank $K$}.
\end{remark}

\subsection{String Matroid Arising from Set Matroid}
We now discuss how we can identify a set matroid with a string matroid in a structure preserving way. Let $(X, N)$ be a set matroid of finite rank, and $S$ a set in $X$. Define the \textbf{string permutation set} of $S$ to be  $StrP(S)  = \{(s_1, ..., s_{|S|}) \in N^{|S|}: s_i \neq s_j \text{ when } i \neq j\ ,s_i \in S\}$. The universal action set whose actions come from the ground set $N$ of the finite rank set matroid $(N, X)$ will be denoted by $\mathcal{A}^*(N, X)$. We then define the map $\phi: X \to \mathcal{P}(\mathcal{A}^*(N, X))$ by $\phi(S) = StrP(S)$. The string matroid of rank $K$ corresponding to $(N, X)$ is then defined to be \[ \mathcal{A}_K^*(N, X) = \bigcup_{S \in X}\phi(S). \]

To establish some results, we will need to translate between $\mathcal{A}^*_K(N, X)$ and $(N, X)$ and to do so we define the map $\psi: \mathcal{A}^*_K(N, X) \to (N, X)$ by $\psi(A) = S$ where $A \in \phi(S)$ for some $S \in X$.
\begin{remark}
Some obvious properties of the map $\psi$ worth noting are
\begin{enumerate}
    \item If $A \preccurlyeq B \in \mathcal{A}_k^*(N, X)$, then $\psi(A) \subset \psi(B) \in X$. 
    \item For $S \in X$ and $\{j\} \in X$ such that $S \cup \{j\} \in X$, and $A \in \phi(S)$, $a = \phi(\{j\})$, we have $\psi(A\cdot a) = S\cup \{j\}$.
\end{enumerate}
\end{remark}

Given that we want to maximize functions on strings, we now define a way of translating functions on set matroids to functions on string matroids:

\begin{definition}
Let $f$ be a function on a matroid $(N, X)$. Then the \textbf{string extension} $\Tilde{f}: \mathcal{A}_K^*(N, X) \to \mathbb{R}$ is defined as $\Tilde{f}(A) := f(\psi(A))$.
\end{definition}

\begin{remark}
    Notice that in the way we defined our string extension, $\Tilde{f}$ is only determined by the components of the string and not their order. Therefore $\Tilde{f}$ does not take different values on strings which represent permutations of the same set.
\end{remark}

We may now prove our translation results.

\begin{lemma}
If $(N, X)$ is a finite rank $K$ set matroid, then $\mathcal{A}_K^*(N, X)$ is a finite rank $K$ string matroid.
\end{lemma}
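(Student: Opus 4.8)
The plan is to verify directly the three defining axioms of a finite rank $K$ string matroid for $\mathcal{A}_K^*(N,X)$, translating each corresponding set-matroid axiom through the maps $\phi$ and $\psi$ together with the length identity $|A| = |\psi(A)|$, which is immediate from the definition $\phi(S) = StrP(S)$.

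For the first axiom, I would take any $A \in \mathcal{A}_K^*(N,X)$ and write $A \in \phi(S) = StrP(S)$ for some $S \in X$; since every tuple in $StrP(S)$ has length exactly $|S|$, the cardinality bound $|S| \leq K$ of the set matroid gives $|A| \leq K$. For the second (downward-closure) axiom, I would take $B \in \phi(S)$ with $A \preccurlyeq B$, so that $A$ is a prefix of a permutation of $S$; its components then form a subset $T = \psi(A) \subseteq S$ with $A \in StrP(T)$, and axiom~2 of the set matroid yields $T \in X$, hence $A \in \phi(T) \subseteq \mathcal{A}_K^*(N,X)$.

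The step I expect to be the crux is the string augmentation axiom. Given $A, B \in \mathcal{A}_K^*(N,X)$ with $|A|+1 = |B|$, I would set $S = \psi(A)$ and $S' = \psi(B)$, so that $|S|+1 = |S'|$, and invoke axiom~3 of the set matroid to obtain an element $j \in S' \setminus S$ with $S \cup \{j\} \in X$. Two points require care: first, that $j$, viewed as a length-one action $a$, is genuinely a component of $B$, which holds because $B \in StrP(S')$ and $j \in S'$; second, that $A \cdot a$ is itself a valid string permutation, i.e.\ has distinct components, which is exactly where the condition $j \notin S$ is used to rule out repetition. The second property of $\psi$ recorded in the preceding remark then gives $\psi(A \cdot a) = S \cup \{j\} \in X$, so $A \cdot a \in \phi(S \cup \{j\}) \subseteq \mathcal{A}_K^*(N,X)$.

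The only genuine obstacle is the bookkeeping in this third axiom: matching the matroid exchange element $j$ to an actual component of the string $B$, and confirming that the concatenation $A \cdot a$ lands in some $StrP(S \cup \{j\})$ rather than merely in $N^{*}$. Once the distinctness condition $j \notin S$ is tracked carefully, the remaining verifications reduce to the two properties of $\psi$ already established, and the three axioms together show $\mathcal{A}_K^*(N,X)$ is a finite rank $K$ string matroid.
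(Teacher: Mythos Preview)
Your proposal is correct and follows essentially the same approach as the paper: both arguments verify the three string-matroid axioms directly by translating each to the corresponding set-matroid axiom via $\phi$ and $\psi$ and the length identity $|A|=|\psi(A)|$. Your treatment is in fact slightly more careful than the paper's, as you explicitly track the distinctness condition $j \notin S$ needed to ensure $A \cdot a \in StrP(S \cup \{j\})$, a point the paper leaves implicit.
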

\begin{proof}
The fact that  $\mathcal{A}_k^*(N, X)$ is rank $K$ follows from the fact that $(N, X)$ is rank $K$ and that any $A \in \phi(S)$ for $S \in X$ must satisfy $|A| \leq K$. Let $B \in \mathcal{A}^*_K(N, X)$, and $A \preccurlyeq B$. Then by the first observation in \textit{Remark} 5 we see that $\psi(A) \subset \psi(B)$, and since $\psi(B) \in X$, $\phi(\psi(B)) \subset \mathcal{A}^*_K(N, X)$, which means that $B \in \mathcal{A}^*_K(N, X)$ as desired. Lastly we see that if $A, B \in \mathcal{A}^*_K(N, X)$ are strings such that $|B| = |A| + 1$, then $|\psi(B)| = |\psi(A)| + 1$. By the third set matroid axiom we have the existence of an element of $\psi(B)$ which can be added to $\psi(A)$. This exact element can be concatenated to $A$ in order to produce another string in the string matroid. 
\end{proof}

\begin{lemma}
If $f$ is a set submodular function on a set matroid $(N,X)$, then the string extension $\Tilde{f}$ is string submodular on $\mathcal{A}_K^*(N, X)$. 
\end{lemma}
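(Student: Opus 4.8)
The plan is to verify the two defining properties of string submodularity — forward monotonicity and diminishing returns — separately, in each case translating the string-level statement into a set-level statement via the map $\psi$, invoking the corresponding property of the set submodular function $f$, and translating back through the defining identity $\Tilde{f}(A) = f(\psi(A))$. The two properties of $\psi$ collected in \textit{Remark} 5 are precisely the bridges needed for this translation, so the argument should reduce to a clean two-part check.

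For forward monotonicity I would take any $A \preccurlyeq B$ in $\mathcal{A}_K^*(N,X)$. The first property of $\psi$ in \textit{Remark} 5 gives $\psi(A) \subset \psi(B)$ with both sets lying in $X$. Applying the forward monotone property of $f$ yields $f(\psi(A)) \leq f(\psi(B))$, which by definition is exactly $\Tilde{f}(A) \leq \Tilde{f}(B)$.

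For diminishing returns, the key maneuver is to identify the feasible action $a \in \mathcal{A}$ with the ground-set element it names, say $\psi(a) = \{j\}$. Taking $A \preccurlyeq B$ and $a$ feasible at both $A$ and $B$, I would first use the first property of $\psi$ to obtain $\psi(A) \subset \psi(B)$, and then use the second property of $\psi$ (with $S = \psi(A)$ and $S = \psi(B)$ in turn) to compute $\psi(A \cdot a) = \psi(A) \cup \{j\}$ and $\psi(B \cdot a) = \psi(B) \cup \{j\}$. Feeding these into the diminishing returns property of $f$ for the nested sets $\psi(A) \subset \psi(B)$ and the element $j$ gives $f(\psi(A)\cup\{j\}) - f(\psi(A)) \geq f(\psi(B)\cup\{j\}) - f(\psi(B))$, which upon substituting the definition of $\Tilde{f}$ is precisely $\Tilde{f}(A\cdot a) - \Tilde{f}(A) \geq \Tilde{f}(B \cdot a) - \Tilde{f}(B)$.

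The step I expect to require the most care is confirming that the feasibility hypotheses transfer correctly between the two settings and that a \emph{single} ground-set element underlies the action $a$ at both $A$ and $B$. Concretely, $a$ being feasible at $A$, meaning $A \cdot a \in \mathcal{A}_K^*(N,X)$, must force $j$ to be a genuinely new element addable to $\psi(A)$, so that $\psi(A)\cup\{j\} \in X$ and the hypotheses of the second $\psi$-property and of the diminishing returns of $f$ are met; the same must hold at $B$ with the \emph{same} $j$. Since the strings in a permutation set $StrP(S)$ have distinct components and $A \preccurlyeq B$ places the components of $A$ among those of $B$, the element $j$ named by $a$ lies outside $\psi(B) \supset \psi(A)$, so feasibility at $B$ indeed matches $\psi(B)\cup\{j\} \in X$ and one element $j$ serves both inequalities. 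Verifying this consistency is the only nonroutine point; once it is settled, both defining properties follow immediately from the corresponding properties of $f$.
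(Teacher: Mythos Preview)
Your proposal is correct and follows essentially the same route as the paper: translate the string-level conditions to set-level via $\psi$, invoke the corresponding set-submodularity property of $f$, and translate back through $\Tilde{f}(A)=f(\psi(A))$, using the two observations of \textit{Remark}~5 as the bridge. You are in fact more careful than the paper in explicitly verifying that feasibility of $a$ at $A$ and $B$ transfers to $\psi(A)\cup\{j\},\,\psi(B)\cup\{j\}\in X$ with a single common element $j\notin\psi(B)$; the paper leaves this implicit.
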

\begin{proof}
Note that the forward monotone condition is a result of both observations of \textit{Remark} 5 combined with the set submodularity of $f$.  Now suppose that $A \preccurlyeq B$, and $a$ is feasible for both $A$ and $B$, then 
$$
\begin{aligned}
\Tilde{f}(A \cdot a) - \Tilde{f}(A) = & f(\psi(A \cdot a)) - f(\psi(A)) \\
  \geq & f(\psi(B \cdot a)) - f(\psi(B))\\
  =  & \Tilde{f}(B \cdot a) - \Tilde{f}(B)\\
\end{aligned} 
$$
as desired. 
\end{proof}

\subsection{Summary and Assumptions for Following Results}
The main distinctions between a uniform string matroid and the one arising from a set matroid are the notions of permutation invariance of the actions and whether or not repetition of actions is allowed. In a uniform string matroid, different orders of the same actions can produce different outputs. Also, the same action can appear multiple times in a string. For a string matroid arising from a set matroid, the repetition of any action in a string is not allowed and the outputs of different orders the same actions are invariant to permutations.  


The assumptions we have made about our functions and their domains so far have some recurring implications worth discussion here. All the analysis from the main results in Sec. \rom{4} is based on these assumptions. 

\begin{ass} $f$ is a forward monotone. 
\end{ass}
Being forward monotone guarantees that $\rho_a(A) \geq 0$. Combining this fact with the third string matroid property above, we obtain that the greedy algorithm will produce a string of the same length as the optimal string.

\begin{ass} 
 $f$ is submodular.
\end{ass}
 The second string matroid property guarantees that $A \in X$ for any $A \preccurlyeq B$ with $B \in X$. So we can always compute the discrete derivative along any singleton string. Combining this with the string submodularity assumption, we see that for any $a$ that is feasible for  a string $A$, $\rho_a(A) \leq \rho_a(\emptyset)$. In other words, the discrete derivative on the empty set bounds the discrete derivative on any larger string $A$ for which $a$ is feasible. The aforementioned fact also enables us to deduce that $\rho_a(A) / \rho_a(\emptyset) \leq 1$ when $\rho_a(\emptyset) > 0$. Lastly, notice that both of these properties guarantee that $\rho_{g_1}(\emptyset)$ is the largest possible discrete derivative of $f$.

\section{Previous Work}
 Most of the previous work regarding guarantees of the performance bound of the greedy strategy has been focused on the case where the domain is a finite set matroid, and can be placed into one of the following three categories. The work with string functions and string matroids happens to suffer from the same setbacks as the work on set matroids in category 2, so those results are included there.

\begin{cat}
\textbf{Classical Results and Algorithms} 
\end{cat}
Such results as in \cite{nemhauser1978analysis} guarantee that the greedy strategy will have an output of $f$ that is at least
$1/2$ of the output of $f$ on the optimal set for all set submodular functions on set matroids. When the set matroid is assumed to be uniform, we then have $1-e^{-1}$ for the performance bound. The downside of such results is that the greedy algorithm may yield high performance bound under weak subdmobularity. The above results only produce conservative bound in that case. 

\begin{cat} 
\textbf{Computationally Difficult Curvature}
\end{cat}
Other attempts at providing solutions to the problem, such as  \cite{wang2016approximation} and \cite{liu2019improved}, propose different notions of curvature which seek to measure the degree to which returns are diminished for the function $f$, and then establish bounds using these curvatures. Wang \textit{et al.} proposed the \textit{elemental curvature} in \cite{wang2016approximation}, which is defined as 
$$\alpha_{e
} = \max_{S\subset X, i, j \in X\setminus S, i \neq j}\frac{\rho_i(S \cup \{j\})}{\rho_i(S)},
$$ 
while in \cite{liu2019improved}, Liu \textit{et al.} introduced  notion of \textit{partial curvature} defined as 
$$
\alpha_{p} = \max_{j \in S \in X} \frac{\rho_j({S}\setminus \{j\})}{\rho_j(\emptyset)}.
$$ 
These curvatures are computable for very small matorids, but become computationally infeasible for large finite matroids. In a similar vein, the curvatures mentioned in \cite{zhang2015string} and \cite{streeter2008online} for string functions are also computationally intractable in general.

\begin{cat}
\textbf{Computationally Feasible Curvature}
\end{cat}
Conforti and Cornuéjols proposed the \textit{greedy curvature} in \cite{conforti1984submodular}, which is specific to each $f$ and its corresponding matroid $(N,X)$. This is the first computable curvature whose value is then used to yield the performance bound. 
Most recently, Welikala \textit{et al.} proposed \textit{extended greedy curvature}, which is also computable alongside the execution of greedy algorithm and specific to submodular functions and their set matroid domains \cite{welikala2022new}. Unfortunately, it requires computing greedy actions beyond the horizon $K$, which exceeds the domain on which the original problem is defined. \\
~\\
\indent The present work focuses on improving the greedy curvature bound in \cite{conforti1984submodular}. 
Specifically, the greedy curvature was defined as: 

$$
\alpha_G = 1- \min_{j \in N^i}\min_{1 \leq i \leq K}{ \frac{\rho_j(G_{i-1})}{\rho_j(\emptyset)}}
$$

\noindent where $N^i = \{j \in N \setminus G_{i-1}: G_{i-1} \cup \{j\} \in X, \rho_j(\emptyset) >0\}$. Allowing $\alpha$ to denote the second term in the expression for the greedy curvature above, we can write $\alpha_G = 1-\alpha$. 

Conforti and Cornuéjols then used the greedy curvature to prove the following lower bound on the performance of the greedy algorithm.

\begin{theorem}
The ratio $f(G_K) / f(O_K)$ is bounded below by  $1-\alpha_G \left( K-1 \right) / K$ \cite{conforti1984submodular}.
\end{theorem}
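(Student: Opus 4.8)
The plan is to convert the ratio bound into the equivalent additive statement
$f(G_K) \ge \left(\tfrac{1}{K} + \alpha\tfrac{K-1}{K}\right) f(O_K)$,
which is the target because $\alpha_G = 1-\alpha$ gives the identity $1 - \alpha_G\tfrac{K-1}{K} = \tfrac{1}{K} + \alpha\tfrac{K-1}{K}$. Before anything else I would record two one-line consequences of the curvature definition. First, by the discussion after Assumption 2 every ratio $\rho_j(G_{i-1})/\rho_j(\emptyset)$ lies in $[0,1]$, so $\alpha \in [0,1]$. Second, by definition of the minimizing $\alpha$, we have $\rho_{o_i}(G_{i-1}) \ge \alpha\,\rho_{o_i}(\emptyset)$ for every stage $i$; in the degenerate case $\rho_{o_i}(\emptyset)=0$, forward monotonicity and diminishing returns force $\rho_{o_i}(G_{i-1})=0$ as well, so this inequality holds trivially and I need not restrict to $N^i$.

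Next I would set up two building blocks. For the optimal string, expand $f(O_K)$ as the telescoping sum of discrete derivatives $\rho_{o_j}\bigl((o_1,\dots,o_{j-1})\bigr)$ (legitimate since prefixes of $O_K$ lie in the matroid) and apply diminishing returns along $\emptyset \preccurlyeq (o_1,\dots,o_{j-1})$ to obtain the upper bound $f(O_K) \le R := \sum_{j=1}^K \rho_{o_j}(\emptyset)$. For the greedy string, write $f(G_K) = \sum_{i=1}^K \rho_{g_i}(G_{i-1})$ and single out the first increment $d_1 := \rho_{g_1}(\emptyset)$. For each $i \ge 2$, feasibility of $o_i$ at $G_{i-1}$ together with the greedy choice gives $\rho_{g_i}(G_{i-1}) \ge \rho_{o_i}(G_{i-1})$, and combining with the curvature inequality yields $\rho_{g_i}(G_{i-1}) \ge \alpha\,\rho_{o_i}(\emptyset)$. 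For the first step, since $G_0 = \emptyset$ each singleton $o_i$ is feasible, so $d_1 = \max_{g}\rho_g(\emptyset) \ge \rho_{o_i}(\emptyset)$ for every $i$, whence $d_1 \ge R/K$.

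I would then assemble everything in a single chain. Summing the increments gives $f(G_K) \ge d_1 + \alpha\sum_{i=2}^K \rho_{o_i}(\emptyset) = d_1 + \alpha\bigl(R - \rho_{o_1}(\emptyset)\bigr)$. Using $\rho_{o_1}(\emptyset) \le d_1$ with $\alpha \ge 0$ to replace $\rho_{o_1}(\emptyset)$ by $d_1$ bounds this below by $(1-\alpha)d_1 + \alpha R$; then $d_1 \ge R/K$ and $1-\alpha \ge 0$ yield $(1-\alpha)\tfrac{R}{K} + \alpha R = \left(\tfrac{1}{K} + \alpha\tfrac{K-1}{K}\right)R$. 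Finally $R \ge f(O_K)$ and the fact that the coefficient lies in $[0,1]$ complete the estimate.

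The main obstacle, and the reason the bound improves on the naive estimate, is that one cannot reach the target by discounting every step uniformly (which only gives $f(G_K) \ge \alpha f(O_K)$) nor by taking a convex combination of the two standalone bounds $f(G_K) \ge \alpha R$ and $f(G_K) \ge R/K$, since the target coefficient $\tfrac{1}{K} + \alpha\tfrac{K-1}{K}$ strictly exceeds both $\alpha$ and $\tfrac{1}{K}$ whenever $0 < \alpha < 1$ and $K>1$. The extra $\tfrac{1-\alpha}{K}$ must instead be extracted by keeping the first greedy increment $d_1$ undiscounted and feeding $d_1 \ge R/K$ into the same chain as the curvature-discounted tail. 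The delicate point is therefore routing $d_1$ correctly through both the replacement of $\rho_{o_1}(\emptyset)$ and the averaging step so that the first optimal term is neither dropped nor double-counted.
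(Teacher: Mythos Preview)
The paper does not give its own proof of Theorem 1; it is simply quoted from Conforti and Cornu\'ejols. The only argument in the paper that implies the bound is the proof of Theorem 2, which establishes the stronger inequality $f(G_K)/B \ge 1-\alpha_G(K-1)/K$ for the computable upper bound $B\ge f(O_K)$, so Theorem 1 falls out as a corollary.

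Your direct proof is correct. The chain $f(G_K)\ge d_1+\alpha\sum_{i\ge 2}\rho_{o_i}(\emptyset)\ge (1-\alpha)d_1+\alpha R\ge \bigl(\tfrac{1}{K}+\alpha\tfrac{K-1}{K}\bigr)R\ge \bigl(\tfrac{1}{K}+\alpha\tfrac{K-1}{K}\bigr)f(O_K)$ is valid at every step under the paper's standing hypotheses (Remark 2 guarantees the feasibility of $o_i$ at $G_{i-1}$ that you invoke, and the degenerate case $\rho_{o_i}(\emptyset)=0$ is handled as you say). Keeping the first greedy increment undiscounted and feeding $d_1\ge R/K$ into the tail is exactly the mechanism that produces the extra $(1-\alpha)/K$.

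Compared with the paper's Theorem 2 argument, your route is more elementary: you never introduce the auxiliary bound $B$ or the coefficients $\beta_i$, and you work directly against $f(O_K)$. The paper instead manipulates $f(G_K)/B$ algebraically, peels off the single term $(1-\alpha)\rho_{g_1}(\emptyset)$, and closes with $B\le K\rho_{g_1}(\emptyset)$. Both proofs ultimately rest on the same two facts---$\rho_{g_1}(\emptyset)$ dominates every singleton derivative, and $K\rho_{g_1}(\emptyset)$ dominates $f(O_K)$---but the paper's detour through $B$ buys the sharper conclusion with $B$ in the denominator, which is the point of Theorem 2.

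One purely notational remark: your symbol $R$ clashes with the paper's Definition 6, where $R$ denotes $\sum_i\rho_{r_i}(\emptyset)$ (the $K$ largest singleton increments, not the optimal ones). If this is to live inside the paper you should rename your intermediate sum.
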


With some simple algebraic manipulation, we can write their lower bound as $1/K + \alpha \left( K-1 \right)/K$. Such a form for the bound will simplify the proofs presented in the next section.

\section{Main Results}

With most of the preliminaries out of the way, we can now state the main idea underlying this paper. In order to compute a lower bound for the ratio $f(G_K)/f(O_K)$, we find a computable upper bound for $f(O_K)$, i.e. $f(O_K) \leq B$, so that $f(G_K)/B \leq f(G_K)/f(O_K)$. The smaller the difference between $B$ and $f(O_K)$, the better the lower bound will be. 

Lemmata 1 and 2 guarantee that the computations performed involving the objective function defined on set matroids will be equal to computations involving the string extension of that objective function on the corresponding string matroid. Therefore for the rest of the paper, we will work in the uniform string matroid case. The conditions mentioned in \textit{Remark} 2 are satisfied for both rank $K$ uniform string matroids, as well as string matroids arising from finite rank set matroids. In the case of rank $K$ uniform string matroids, those conditions hold due to the fact that any action is feasible at any stage, and we allow for strings with repeated actions.


As for the string matroids arising from set matroids, we only need the following lemma to establish this fact. When the lemma is applied to the basis of the optimal set in the corresponding set matroid, it guarantees that $(g_1, ..., g_{i-1}, o_i)$ is always a string in the string matroid.

\begin{lemma} The elements of any basis in a finite rank matroid $\Omega_K = \{\omega_1, ..., \omega_K\} \in X$ can be ordered so that $\rho_{\omega_i}(G_{i-1}) \leq \rho_{g_i}(G_{i-1})$. Furthermore, if $\omega_i \in G_K \cap \Omega_K$, then $\omega_i \equiv g_i$ \cite{conforti1984submodular}.
\end{lemma}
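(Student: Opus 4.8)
The plan is to exploit that both the greedy output $G_K = \{g_1, \ldots, g_K\}$ and the given basis $\Omega_K$ are bases of the finite rank $K$ matroid $(N, X)$: the former because, as noted under Assumption 1, forward monotonicity together with the third matroid axiom makes the greedy algorithm run for the full $K$ steps while keeping every prefix $G_i \in X$. Once both are bases of the same rank, the statement becomes a matroid exchange claim, and the inequality $\rho_{\omega_i}(G_{i-1}) \le \rho_{g_i}(G_{i-1})$ comes for free: if the ordering can be arranged so that $G_{i-1} \cup \{\omega_i\} \in X$ for every $i$, then $\omega_i$ is feasible at the string $G_{i-1}$, and since $g_i$ is by definition the feasible action maximizing the discrete derivative at $G_{i-1}$, the inequality is immediate. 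Thus the whole problem reduces to producing an ordering of $\Omega_K$ for which each single-element augmentation $G_{i-1} \cup \{\omega_i\}$ stays independent, with common elements pinned to the greedy index.

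First I would build the ordering by induction on $m = 0, 1, \ldots, K$, maintaining the invariant that (a) $G_{i-1} \cup \{\omega_i\} \in X$ for $i \le m$, (b) the set $B_m := G_m \cup T_m$ is a basis, where $T_m \subseteq \Omega_K$ is the set of still-unassigned basis elements, and (c) every common element $g_j \in \Omega_K$ with $j \le m$ has already been assigned as $\omega_j = g_j$. The base case $m = 0$ takes $T_0 = \Omega_K$ and $B_0 = \Omega_K$. For the inductive step there are two cases. If $g_{m+1} \in T_m$ (in particular whenever $g_{m+1}$ is common, since such an element lies in $\Omega_K$ and, by (c) and distinctness of the greedy elements, cannot yet be assigned), I set $\omega_{m+1} := g_{m+1}$; then $G_m \cup \{\omega_{m+1}\} = G_{m+1} \in X$ and $B_{m+1} = B_m$ remains a basis. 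Otherwise $g_{m+1} \notin B_m$, so $g_{m+1}$ is non-common, and I pass to the fundamental circuit of $g_{m+1}$ with respect to the basis $B_m$.

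The crux is to show this fundamental circuit $C \subseteq B_m \cup \{g_{m+1}\}$ always offers a \emph{non-common} element of $T_m$ to exchange out, which is exactly what keeps the common elements reserved for their greedy positions. I would argue as follows: $C$ must meet $T_m$, for otherwise $C \subseteq G_{m+1}$, contradicting independence of $G_{m+1}$. If every element of $C \cap T_m$ were common, then each would be a greedy element, and together with $g_{m+1}$ and $C \cap G_m \subseteq G_m$ this would force $C \subseteq G_K$, which is impossible since $C$ is dependent while $G_K$ is independent. Hence some non-common $e \in C \cap T_m$ exists; setting $\omega_{m+1} := e$ makes $B_{m+1} = (B_m \cup \{g_{m+1}\}) \setminus \{e\} = G_{m+1} \cup (T_m \setminus \{e\})$ a basis, while $G_m \cup \{\omega_{m+1}\} \subseteq B_m$ is independent, so all three invariants persist and no common element is consumed prematurely.

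Carrying the induction to $m = K$ yields an ordering with $G_{i-1} \cup \{\omega_i\} \in X$ for all $i$ and with $\omega_i \equiv g_i$ whenever $\omega_i \in G_K \cap \Omega_K$, which establishes the ``furthermore'' clause; the inequality then follows from greedy optimality as noted above. The main obstacle I anticipate is precisely the circuit argument in the non-common case, i.e., guaranteeing that the exchange can always avoid the reserved common elements, whereas the reduction to a matroid exchange problem and the derivation of the inequality are routine. I would also record that I am freely using standard matroid facts (equicardinality of bases, uniqueness of the fundamental circuit, and single-element basis exchange), all consequences of axioms 1--3 in the definition of a finite rank set matroid, and that everything transfers to the string setting through the permutation-invariant string extension $\tilde{f}$ of Lemmata 1 and 2.
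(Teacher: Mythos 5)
Your proof is correct: the inductive exchange construction, with the fundamental-circuit case analysis guaranteeing that a non-common element can always be swapped out (thus keeping common elements pinned to their greedy positions), yields both the ordering and the ``furthermore'' clause, and the inequality $\rho_{\omega_i}(G_{i-1}) \leq \rho_{g_i}(G_{i-1})$ then follows from greedy optimality exactly as you note. This is essentially the same route as the paper's, whose ``proof'' is only a remark deferring to the induction-by-exchange argument of \cite{conforti1984submodular} (observing that finite rank makes all bases equicardinal so that argument applies verbatim); your write-up simply reconstructs that argument in full, self-contained form.
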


\begin{remark}
The proof of this statement is identical to the original proof presented in \cite{conforti1984submodular}, since the assumption of finite rank and the third matroid property guarantee that all bases will be the same finite size. Thus all that remains is to proceed with the same proof by induction.
\end{remark}

For the remainder of the paper we will assume that our string matroid $X$ is of finite rank $K$ and either (1) the uniform string matroid or (2) a string matroid arising from a set matroid.

In order to construct a lower bound that beats the greedy curvature bound in \cite{conforti1984submodular}, we create two upper bounds for $f(O_K)$ and choose the best of the two by taking their minimum. To introduce these new bounds we need the definitions that follow, wherein we frequently use elements from the set $X^G_i = \{ a\in X: \rho_a(\emptyset) > 0, a \text{ is feasible at } G_{i-1}\}$. The bounds we derive here dominate the greedy curvature bound in \cite{conforti1984submodular} when applied to both types of string matroids mentioned above. 


\begin{definition}
 For $i =1, ..., K$, define $$\alpha_i := \min_{a \in X^G_i} \frac{\rho_a(G_{i-1})}{\rho_a(\emptyset)}. $$ 
\end{definition}

\begin{remark}From the discussion under Assumption 2, we see that a nondecreasing $f$ produces $\rho_a(A) \geq 0$ for all $A \in X$ and that $\rho_a(G_{i-1})/\rho_a(\emptyset) \leq 1$. The above observation tells us that $0 \leq \alpha_i \leq 1$ for $i=1, ..., K$ with $\alpha_1 = 1$ by definition. Lastly, $\alpha$ defined along with greedy curvature can be written as $\alpha = \min_{1 \leq i \leq K}\{\alpha_i\}_{i=1}^K $. 
\end{remark}

\begin{definition}
$$
 S(G_K, \alpha) := \begin{cases}
    \sum_{i=1}^K\frac{1}{\alpha_i}\rho_{g_i}(G_{i-1}) & \alpha > 0\\
    \infty & \alpha = 0
    \end{cases}
$$
\end{definition} 

We now introduce our first upper bound.

\begin{lemma}
The sum $S(G_K, \alpha)$ is an upper bound for $f(O_K)$. 
\end{lemma}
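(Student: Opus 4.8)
The plan is to bound $f(O_K)$ from above by a telescoping argument that rewrites $f(O_K)$ as a sum of discrete derivatives of the optimal string, and then to replace each of these derivatives by the corresponding greedy increment $\rho_{g_i}(G_{i-1})$ scaled by $1/\alpha_i$. The case $\alpha = 0$ is immediate, since then $S(G_K,\alpha)=\infty$ and the inequality holds vacuously; so I would assume $\alpha > 0$, which forces every $\alpha_i > 0$ (as $\alpha = \min_i \alpha_i$) and makes each reciprocal $1/\alpha_i$ well defined and positive.

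First I would use $f(\emptyset)=0$ (Remark 1) to write the telescoping identity $f(O_K)=\sum_{i=1}^K \rho_{o_i}(O_{i-1})$, where $O_{i-1}=(o_1,\dots,o_{i-1})$. Since $\emptyset \preccurlyeq O_{i-1}$, the diminishing-returns property (Assumption 2) gives $\rho_{o_i}(O_{i-1}) \le \rho_{o_i}(\emptyset)$ for each $i$, so it suffices to bound $\sum_{i=1}^K \rho_{o_i}(\emptyset)$. Next, for each index $i$ with $\rho_{o_i}(\emptyset) > 0$ I would invoke Remark 2, which guarantees that $o_i$ is feasible at $G_{i-1}$; this places $o_i$ in the set $X^G_i$, so by the definition of $\alpha_i$ we have $\alpha_i \le \rho_{o_i}(G_{i-1})/\rho_{o_i}(\emptyset)$, i.e. $\rho_{o_i}(\emptyset) \le \frac{1}{\alpha_i}\rho_{o_i}(G_{i-1})$. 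For the remaining indices, where $\rho_{o_i}(\emptyset)=0$, monotonicity forces $\rho_{o_i}(O_{i-1})=0$, so these terms contribute nothing and can be dropped.

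Finally I would use the greedy maximality of $g_i$ at stage $i$ (equivalently, the ordering supplied by Lemma 3 in the set-matroid case) together with feasibility of $o_i$ at $G_{i-1}$ to conclude $\rho_{o_i}(G_{i-1}) \le \rho_{g_i}(G_{i-1})$; since $1/\alpha_i > 0$, this upgrades the previous bound to $\rho_{o_i}(\emptyset) \le \frac{1}{\alpha_i}\rho_{g_i}(G_{i-1})$. Summing over $i$ then yields $f(O_K) \le \sum_{i=1}^K \frac{1}{\alpha_i}\rho_{g_i}(G_{i-1}) = S(G_K,\alpha)$, as claimed.

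The main obstacle I anticipate is the bookkeeping around feasibility and degeneracy rather than any deep estimate: one must verify that each relevant $o_i$ genuinely lies in $X^G_i$ so that $\alpha_i$ controls its normalized derivative, handle the degenerate terms $\rho_{o_i}(\emptyset)=0$ separately, and, in the set-matroid setting, justify that the reordering from Lemma 3 is compatible with the telescoping decomposition. The last point should be harmless because the string extension is permutation invariant (Remark 4), so $f(O_K)$ is unchanged under the reordering that makes $\rho_{o_i}(G_{i-1}) \le \rho_{g_i}(G_{i-1})$ hold simultaneously for all $i$.
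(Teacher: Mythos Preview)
Your proposal is correct and follows essentially the same argument as the paper: telescope $f(O_K)$ into $\sum_i \rho_{o_i}(O_{i-1})$, bound each term by $\rho_{o_i}(\emptyset)$ via submodularity, and then use $\alpha_i \le \rho_{o_i}(G_{i-1})/\rho_{o_i}(\emptyset)$ together with greedy maximality and feasibility of $o_i$ at $G_{i-1}$ to reach $\frac{1}{\alpha_i}\rho_{g_i}(G_{i-1})$. Your explicit handling of the degenerate case $\rho_{o_i}(\emptyset)=0$ and of the reordering via Lemma~3 with permutation invariance is in fact more careful than the paper's own writeup, which simply asserts the termwise inequality.
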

\begin{proof}
When $\alpha = 0$ the result is trivial. We argue that the following chain of inequalities holds:
$$
\begin{aligned}
f(O_K) & \stackrel{(1)}{=} \sum_{i=1}^K\rho_{o_i}(O_{i-1}) \stackrel{(2)}{\leq } \sum_{i=1}^K\rho_{o_i}(\emptyset) \\ & \stackrel{(3)}{\leq } \sum_{i=1}^K\frac{1}{\alpha_i}\rho_{g_i}(G_{i-1}).
\end{aligned}
$$ 
Equality (1) follows from a telescoping argument, and inequality (2) follows from the submodularity discussion in Assumption 2. To see inequality (3), we show that the second sum is termwise larger than the first. Being termwise greater is a result of the following three facts: (1) $\alpha_i\rho_{o_i}(\emptyset) \leq \rho_{o_i}(G_{i-1}) \leq \rho_{g_i}(G_{i-1})$  given that by the definition of $\alpha_i \leq \rho_{o_i}(G_{i-1}) / \rho_{o_i}(\emptyset)$; (2) $g_i$ maximizes the discrete derivative at $G_{i-1}$; and (3) $\rho_{o_i}(G_{i-1})$ exists when $X$ is a uniform string matroid or a string matroid arising from a set matroid. 
\end{proof}

We then construct the second upper bound for $f(O_K)$ that is useful in the event that $\alpha = 0$,  and exploits the fact that under our assumptions any action is feasible for the empty set.

\begin{definition}
Let $r_1 := g_1$, and
$$
r_i := \argmax_{a \in X\setminus \{r_1, ..., r_{i-1}\}} \rho_a(\emptyset) 
$$
for $i=2, ..., K$. Then define $R := \sum_{i=1}^K\rho_{r_i}(\emptyset)$.  
\end{definition}

We then see that $f(O_K) = \sum_{i=1}^K\rho_{o_i}(O_{i-1})  \leq \sum_{i=1}^{K}\rho_{o_i}(\emptyset) \leq \sum_{i=1}^K\rho_{r_i}(\emptyset) = R$ given that the largest $K$ values for $\rho_{r_i}(\emptyset)$ will bound all other discrete derivatives by the discussion in \textit{Assumption} 2, including those along the optimal set $O_K$.

In the next definition, we combine both of these upper bounds for $f(O_K)$ to obtain the best possible upper bound of the two.

\begin{definition}
 $$B := \sum_{i=1}^K\frac{1}{\beta_i}\rho_{g_i}(G_{i-1})$$ where $\beta_i = \begin{cases} 
            \alpha_i, & R \geq S(G_K, \alpha)\\
            \frac{\rho_{g_i}(G_{i-1})}{f(r_i)}, & R < S(G_K, \alpha) 
       \end{cases}$
       \; $i = 1,\cdots,K .$
\end{definition}

\begin{remark}
In a similar manner to the set of $\alpha_i$, we let $\beta = \min_{1 \leq i \leq K}\{\beta_i\}$. An important observation to make is that $f(O_K) \leq B \leq K\rho_{g_1}(\emptyset)$. The second inequality comes from the fact that all discrete derivatives are bounded by $\rho_{g_1}(\emptyset)$, so $K\rho_{g_1}(\emptyset)$ serves as a crude upper bound for $B$. 
\end{remark}

We now begin to establish that $f(G_K) / B$ is superior to the greedy curvature bound in \cite{conforti1984submodular}. 

\begin{theorem}
    The ratio $f(G_K) / f(O_K)$ is bounded below by $ f(G_K) / B$ with 
    $$ 
    \frac{f(G_K)}{B} \geq \frac{1}{K}  + \alpha \frac{ K-1}{K} = 1 - \alpha_{G} \frac{K-1}{K}. 
    $$
\end{theorem}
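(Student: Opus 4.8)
The plan is to reduce the stated inequality to an elementary algebraic comparison between the two upper bounds $S(G_K,\alpha)$ and $R$, and then dispatch it by a short case analysis. The leftmost inequality $f(G_K)/f(O_K) \ge f(G_K)/B$ is immediate once we know $f(O_K) \le B$: the lemma asserting that $S(G_K,\alpha)$ upper-bounds $f(O_K)$ gives $f(O_K) \le S(G_K,\alpha)$, the computation following the definition of $R$ gives $f(O_K) \le R$, and inspecting the two branches of $\beta_i$ shows $B = S(G_K,\alpha)$ when $R \ge S(G_K,\alpha)$ and $B = \sum_{i=1}^K f(r_i) = \sum_{i=1}^K \rho_{r_i}(\emptyset) = R$ when $R < S(G_K,\alpha)$; hence $B = \min\{S(G_K,\alpha), R\}$ and $f(O_K) \le B$. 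The final equality $\tfrac{1}{K} + \alpha\tfrac{K-1}{K} = 1 - \alpha_G\tfrac{K-1}{K}$ is just a restatement of $\alpha_G = 1-\alpha$, so the theorem comes down to proving $f(G_K)/B \ge (1+\alpha(K-1))/K$.

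Write $d_i := \rho_{g_i}(G_{i-1})$ and $F := f(G_K) = \sum_{i=1}^K d_i$. I would first record three facts from the Assumption 2 discussion: $d_i \le d_1 = \rho_{g_1}(\emptyset)$ for every $i$ (so $d_1$ is the largest discrete derivative), $F \ge d_1$, and therefore $R = \sum_{i=1}^K \rho_{r_i}(\emptyset) \le K d_1$. Because $B = \min\{S(G_K,\alpha), R\}$, we have $f(G_K)/B = \max\{F/S(G_K,\alpha),\, F/R\}$ (with the convention $F/\infty = 0$ when $\alpha=0$), so it suffices to show that at least one of the two ratios meets the target $(1+\alpha(K-1))/K$.

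The core computation concerns $F/S(G_K,\alpha)$. Using $\alpha_1 = 1$ and $\alpha_i \ge \alpha$ to bound $S(G_K,\alpha) = d_1 + \sum_{i=2}^K d_i/\alpha_i \le d_1 + (F-d_1)/\alpha$ (valid when $\alpha > 0$), clearing denominators in $F/S(G_K,\alpha) \ge (1+\alpha(K-1))/K$ reduces, after the coefficient of $F$ collapses to $\alpha - 1$, to the single inequality $(1-\alpha)\bigl[(1+\alpha(K-1))d_1 - F\bigr] \ge 0$. Since $1-\alpha \ge 0$, this confirms that $F/S(G_K,\alpha) \ge (1+\alpha(K-1))/K$ holds exactly when $F \le (1+\alpha(K-1))d_1$.

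This is where the bound $R$ earns its place, and I expect the one genuinely delicate point to be recognizing that $S(G_K,\alpha)$ and $R$ are complementary. If $F \le (1+\alpha(K-1))d_1$, the previous step gives $F/S(G_K,\alpha) \ge$ target and hence $f(G_K)/B \ge$ target. If instead $F > (1+\alpha(K-1))d_1$, then $d_1 < F/(1+\alpha(K-1))$, so $R \le Kd_1 < KF/(1+\alpha(K-1))$, which rearranges to $F/R > $ target; again $f(G_K)/B \ge$ target. The degenerate case $\alpha = 0$ sits in this second branch, where the target is $1/K$ and $R \le Kd_1 \le KF$ yields $F/R \ge 1/K$ directly. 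The main obstacle is thus conceptual rather than computational: one must see that the regime in which the greedy-curvature bound $S(G_K,\alpha)$ degrades (large late-stage increments $d_i$ relative to $\alpha d_1$) is precisely the regime in which the top-$K$ bound $R$ becomes tight, which is exactly what taking the minimum in the definition of $B$ exploits.
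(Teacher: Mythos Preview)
Your argument is correct and rests on the same two observations the paper uses: the bound $S(G_K,\alpha)\le d_1+(F-d_1)/\alpha$ (which the paper phrases equivalently as $\sum_i(1-\alpha/\alpha_i)d_i\ge(1-\alpha)d_1$, i.e.\ keeping only the $i=1$ summand) and the crude bound $R\le K d_1$. The difference is organizational. You split on whether $F\le(1+\alpha(K-1))d_1$ and invoke the $S$-bound in one branch and the $R$-bound in the other. The paper avoids this dichotomy entirely: after noting $B\le S$, it writes the identity $f(G_K)/B=\alpha+\tfrac{1}{B}\sum_i(1-\alpha/\alpha_i)d_i$, drops all but the first summand to get $f(G_K)/B\ge\alpha+(1-\alpha)d_1/B$, and then applies $B\le Kd_1$ directly. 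The paper's chain is shorter and never needs to know which of $S$ or $R$ realizes the minimum $B$; your version, on the other hand, makes the complementary roles of the two upper bounds more transparent, which is exactly the conceptual point you highlight at the end. One small wording fix: your step ``clearing denominators in $F/S(G_K,\alpha)\ge(1+\alpha(K-1))/K$'' is really clearing denominators in $F/S'\ge(1+\alpha(K-1))/K$ with $S'=d_1+(F-d_1)/\alpha$; make explicit that the implication $F/S\ge F/S'$ is used first.
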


\begin{proof}
    As we defined above, $B$ serves the upper bound of $f(O_K)$. \\ 
    
    \noindent (a) First, we consider the case where $\alpha = 0$. \\
    The greedy curvature bound becomes $1 / K$ when $\alpha = 0$. Note that $f(G_K) / B \geq f(G_K) / \left( K\rho_{g_1}(\emptyset) \right)\geq 1 / K $, in which the last inequality is based on the fact that $f(G_K) \geq \rho_{g_1}(\emptyset) > 0$. \\

    \noindent (b) We then consider the case where $\alpha > 0$. \\
    By the definition of $\alpha$, we see $\alpha_{i} > 0 $ along with $\rho_{g_{i}}(G_{i-1}) > 0$ for all $1 \leq i \leq K$. We can rewrite the performance bound $f(G_K) / B $ in terms of $\beta$ to compare it with the greedy curvature bound in \cite{conforti1984submodular}:  
    $$
    \begin{aligned}
    \frac{f(G_K)}{B} & = \beta + \frac{f(G_K)}{B} - \beta \\ 
    & = \beta + \frac{1}{B} \left( \sum_{i=1}^K(1-\frac{\beta}{\beta_i})\rho_{g_i}(G_{i-1}) \right) \\
    & \stackrel {(4)}{\geq} \alpha + \frac{1}{B} \left( \sum_{i=1}^K(1-\frac{\alpha}{\alpha_i})\rho_{g_i}(G_{i-1}) \right) \\
    & \stackrel {(5)}{\geq} \alpha + \frac{1}{B}(1-\alpha)\rho_{g_1}(\emptyset) \stackrel {(6)}{\geq} \alpha + (1-\alpha)\frac{\rho_{g_1}(\emptyset)}{K\rho_{g_1}(\emptyset)}\\
    & = \frac{1}{K} + \alpha \frac{K-1}{K} = 1 - \alpha_{G}\frac{K-1}{K}. 
    \end{aligned}
    $$
    Inequality (4) follows from the fact that $\beta_{i}$ either equals $\alpha_{i}$ or $\rho_{g_i}(G_{i-1})/f(r_{i})$ for all $ i = 1,\cdots,K$, whichever produces smaller $B$, and $f(G_K) / B$ must be greater than the bound when $\beta_{i} = \alpha_{i}$. Inequality (5) holds since we only extract the first term and leave the rest nonnegative terms. Lastly, $B \leq K\rho_{g_1}(\emptyset)$ leads to the result from inequality (6). 
    
\end{proof}

\section{Application}
The multi-agent sensor coverage problem was originally proposed in \cite{zhong2011distributed} and further analyzed in \cite{sun2019exploiting} and \cite{welikala2022new}. In a given mission space, we aim to find an optimal placement of a set of sensors to maximize the chance of detecting randomly occurring events. Such a problem can be formulated as a set optimization problem and approximated by greedy solutions. In this section we illustrate the power of our results by applying them to a discrete version of the coverage problem. Our simplified version can be easily generalized to more complicated settings while remaining under the framework of set optimization. 

The mission space $\Omega \in \mathbb{R}^{2}$ is modeled as a non-self-intersecting polygon where $K$ homogeneous sensors should be placed to detect a randomly occurring event in $\Omega$. For simplicity of calculation, we assume both the sensors and the random event can only be placed and occur at lattice points. We denote the feasible space for sensor placement and event occurrence as $\Omega^{D}$. Our goal is to maximize the overall likelihood of successful detection in the mission space.

The likelihood of event occurrence over $\Omega^{D}$ is written as an event mass function $R: \Omega^{D} \xrightarrow{} \mathbb{R}_{\geq 0}$, and we assume that $\sum_{x \in \Omega^{D}} R(x) < \infty$. The outputs $R(x)$ may follow a particular distribution if some prior information is available. Otherwise $R(x) = 1$ when no prior information is obtained. The locations of all the sensors are represented as $\mathbf{s} = (s_1,s_2,\cdots,s_K) \in \prod_{i=1}^K\mathbb{Z}^2$. Each sensor has a finite sensing range with radius $\delta$ and is able to detect any occurring event with certain probability within the sensing range. Henceforth, the visibility region of a sensor located at $s_{i}$ is denoted by $V(s_i) = \{ x \; | \; \|x-s_i\| \leq \delta, \; x \in \Omega^{D}\}$. Additionally, the function $p(x,s_i) = e^{-\lambda \|x-s_i\|} \cdot \mathbf{1}_{\{x \in V(s_i)\}}(x)$ represents the probability of detecting an event occurring at $x$ for a sensor placed at $s_i$, where $\lambda$ is the decay rate characterizing how quick the sensing capability decays along the distance. 

Assuming all the sensors are working independently, the probability of detecting an occurring event at location $x \in \Omega^{D}$ after placing $K$ homogeneous sensors is $p(x,\mathbf{s}) = 1-\prod_{i=1}^{K}\left( 1-p(x,s_i) \right)$. Considering the whole feasible space, we need to employ the event mass function and our objective function of multi-agent sensor coverage becomes $H(\mathbf{s}) = \sum_{x \in \Omega^{D}} R(x)p(x,\mathbf{s})$. We aim to find $\mathbf{s} \in \prod_{i=1}^K\mathbb{Z}^2$ that maximizes $H(\mathbf{s})$:
\begin{equation*}
\label{obj_fun_sensor}
    \mathbf{s}^{*} = \arg\max_{\mathbf{s} \in \Gamma} H(\mathbf{s}), \text{ where }  \Gamma = \{S \subseteq \Omega^{D} : |S| \leq K \}.  
\end{equation*}
If $n$ lattice points in $\Omega^{D}$ are available for sensor placement, we therefore need to choose $K$ out of $n$ locations with complexity being $n!/\left( K!(n-K)! \right)$. This becomes a set optimization problem and exhaustive search is computationally intractable when $n$ is large. Therefore, greedy algorithm is an alternative approach for an approximation in polynomial time. 
It was proved that the continuous version of $H(\mathbf{s})$ is submodular in \cite{sun2019exploiting}, and it is not difficult to verify that its discrete version is also submodular. 

In our experiment, we consider a rectangular mission space of size $60 \times 50$ and $K = 10$ homogeneous sensors are required to be deployed. For a point $p = (x,y)$, the event mass function  is given by $R(x) = \left(x+y \right) / \left( x_{\text{max}}+y_{\text{max}} \right)$, where $x_{\text{max}} = 60$ and $y_{\text{max}} = 50$ are the largest values of the $x$ and $y$ component respectively for the mission space. Such a distribution implies that the randomly occurring event is more likely to happen in the top right corner of the rectangular mission space.  


The comparison of different performance bounds is shown in Fig. \ref{sensor_bound}. A small decay rate implies good sensing capability and strong submodularity, under which the greedy strategy produces a low performance bound. Notice in Fig. \ref{sensor_bound} that our bound (red graph) always exceeds the greedy curvature bound (blue graph) as the theorem states. In addition, we can observe instances in which our bound is larger than the $1-e^{-1}$, while the greedy curvature bound is below this value. 

\begin{figure}[!ht]
    \centering
    \includegraphics[width=8.6cm]{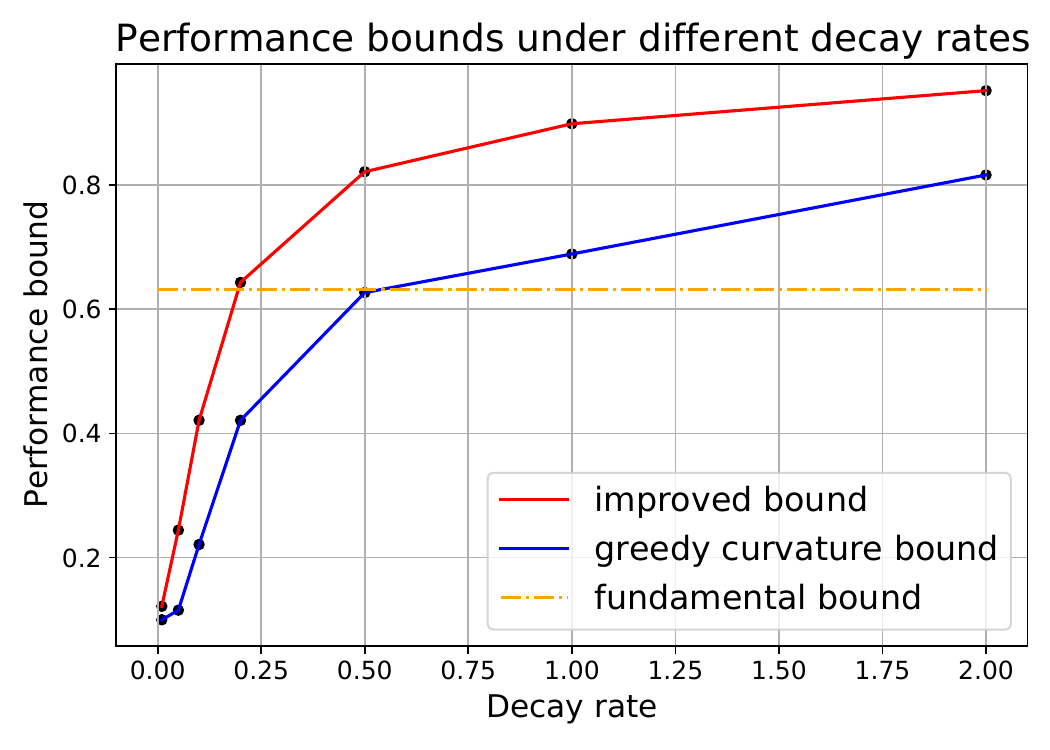}
    \caption{Comparison of performance bounds under different decay rates} 
    \label{sensor_bound}
\end{figure}

\section{Conclusions and Future Work}
We introduced a new lower bound for the performance of the greedy strategy for string submodular functions on uniform string matroids of finite rank, and string matroids of finite rank arising from set submodular functions on set matroids. We then provided a proof that our lower bound is superior to the greedy curvature bound proposed by Conforti and Cornuejols in \cite{conforti1984submodular}. For future directions, we look to extend the bound to string matroids arising from independence systems, as well as applications of the bounds to reinforcement-learning problems. 

\bibliographystyle{ieeetr}
\bibliography{root}

\begin{thebibliography}{10}

\bibitem{lin2011class}
H.~Lin and J.~Bilmes, ``A class of submodular functions for document
  summarization,'' in {\em Proceedings of the 49th annual meeting of the
  association for computational linguistics: human language technologies},
  pp.~510--520, 2011.

\bibitem{krause2012near}
A.~Krause and C.~E. Guestrin, ``Near-optimal nonmyopic value of information in
  graphical models,'' {\em arXiv preprint arXiv:1207.1394}, 2012.

\bibitem{kempe2003maximizing}
D.~Kempe, J.~Kleinberg, and {\'E}.~Tardos, ``Maximizing the spread of influence
  through a social network,'' in {\em Proceedings of the ninth ACM SIGKDD
  international conference on Knowledge discovery and data mining},
  pp.~137--146, 2003.

\bibitem{calinescu2011maximizing}
G.~Calinescu, C.~Chekuri, M.~Pal, and J.~Vondr{\'a}k, ``Maximizing a monotone
  submodular function subject to a matroid constraint,'' {\em SIAM Journal on
  Computing}, vol.~40, no.~6, pp.~1740--1766, 2011.

\bibitem{nemhauser1978analysis}
G.~L. Nemhauser, L.~A. Wolsey, and M.~L. Fisher, ``An analysis of
  approximations for maximizing submodular set functions—i,'' {\em
  Mathematical programming}, vol.~14, no.~1, pp.~265--294, 1978.

\bibitem{conforti1984submodular}
M.~Conforti and G.~Cornu{\'e}jols, ``Submodular set functions, matroids and the
  greedy algorithm: tight worst-case bounds and some generalizations of the
  rado-edmonds theorem,'' {\em Discrete applied mathematics}, vol.~7, no.~3,
  pp.~251--274, 1984.

\bibitem{wang2016approximation}
Z.~Wang, B.~Moran, X.~Wang, and Q.~Pan, ``Approximation for maximizing monotone
  non-decreasing set functions with a greedy method,'' {\em Journal of
  Combinatorial Optimization}, vol.~31, no.~1, pp.~29--43, 2016.

\bibitem{liu2019improved}
Y.~Liu, E.~K. Chong, and A.~Pezeshki, ``Improved bounds for the greedy strategy
  in optimization problems with curvature,'' {\em Journal of Combinatorial
  Optimization}, vol.~37, no.~4, pp.~1126--1149, 2019.

\bibitem{welikala2022new}
S.~Welikala, C.~G. Cassandras, H.~Lin, and P.~J. Antsaklis, ``A new performance
  bound for submodular maximization problems and its application to multi-agent
  optimal coverage problems,'' {\em Automatica}, vol.~144, p.~110493, 2022.

\bibitem{streeter2008online}
M.~Streeter and D.~Golovin, ``An online algorithm for maximizing submodular
  functions,'' {\em Advances in Neural Information Processing Systems},
  vol.~21, 2008.

\bibitem{zhang2015string}
Z.~Zhang, E.~K. Chong, A.~Pezeshki, and W.~Moran, ``String submodular functions
  with curvature constraints,'' {\em IEEE Transactions on Automatic Control},
  vol.~61, no.~3, pp.~601--616, 2015.

\bibitem{zhong2011distributed}
M.~Zhong and C.~G. Cassandras, ``Distributed coverage control and data
  collection with mobile sensor networks,'' {\em IEEE Transactions on Automatic
  Control}, vol.~56, no.~10, pp.~2445--2455, 2011.

\bibitem{sun2019exploiting}
X.~Sun, C.~G. Cassandras, and X.~Meng, ``Exploiting submodularity to quantify
  near-optimality in multi-agent coverage problems,'' {\em Automatica},
  vol.~100, pp.~349--359, 2019.

\end{thebibliography}






\end{document}